\newtheorem{lem}{Lemma}
\newtheorem{res}{Result}
\begin{document}

\title{High SNR Probabilities of Continuous Fluid Antenna Systems in Ricean Environments

\author{\IEEEauthorblockN{Amy S. Inwood\IEEEauthorrefmark{1}, Peter J. Smith\IEEEauthorrefmark{2}, Rajitha Senanayake\IEEEauthorrefmark{3}, and Michail Matthaiou\IEEEauthorrefmark{1}}

    \IEEEauthorblockA{\IEEEauthorrefmark{1}Centre for Wireless Innovation (CWI), Queen’s University Belfast, Belfast BT3 9DT, U.K}
    \IEEEauthorblockA{\IEEEauthorrefmark{2}School of Mathematics and Statistics, Victoria University of Wellington, Wellington, New Zealand}

    \IEEEauthorblockA{\IEEEauthorrefmark{3}Department of Electrical and Electronic Engineering, University of Melbourne, Melbourne, Australia}
    
    \IEEEauthorblockA{Email: (a.inwood, m.matthaiou)@qub.ac.uk, peter.smith@vuw.ac.nz, rajitha.senanayake@unimelb.edu.au}
    
    }

\thanks{This work was supported in part by the U.K. Engineering and Physical Sciences Research Council (EPSRC) (grant No. EP/X04047X/1). The work of M. Matthaiou has received funding from the European Research Council (ERC) under the European Union’s Horizon 2020 research and innovation programme (grant agreement No. 101001331).}}

\maketitle
\begin{abstract}
We consider a single-user (SU) continuous fluid antenna system (CFAS) employing matched filtering (MF) operating over a Ricean fading channel. Focusing on the upper tail of the received signal-to-noise ratio (SNR) distribution (the high SNR probability (HSP)), we derive accurate approximations for the HSP in 1, 2, and 3 dimensions using the expected Euler characteristic (EEC), presenting the first analytical results for a CFAS in a Ricean environment. In the process, we provide the first closed-form expression for the Euler characteristic density of a non-central \(\chi_2^2\) random field. We then examine the impact of the Ricean K-factor on the CFAS performance, emphasizing the critical role of channel variations in achieving a strong HSP.
\end{abstract}
\vspace{-0.5em}
\begin{IEEEkeywords}
Fluid antenna systems, high SNR probability, random fields, Ricean fading, 3D antenna geometries.
\end{IEEEkeywords}
\vspace{-0.5em}
\section{Introduction} \label{sec:intro}

Fluid antenna systems (FASs) \cite{kkwong1,kkwong2,FAS_6G} and movable antennas \cite{zhu_moveable_2024,zhu_modeling_2024} are well-established techniques for leveraging the spatial domain to enhance the signal-to-noise ratio (SNR) and mitigate interference. Through flexible antenna positioning, FASs improve the SNR, enhance diversity, and increase reliability with limited hardware requirements. Continuous FASs (CFASs) \cite{psomas,smith_dimensional_2025} maximize these improvements by considering antennas capable of being positioned anywhere in a continuous space. The performance of a FAS depends on channel variations across possible antenna positions. Thus, rich-scattering Rayleigh channels are expected to provide significant benefits, whereas line-of-sight (LoS) environments may yield more limited improvements. In reality, many environments contain both LoS and scattered components, making Ricean fading a more realistic model for such scenarios. Therefore, in this paper, we analyze the performance of a single-user (SU) CFAS employing matched filtering (MF) under Ricean fading.

There is limited work on FASs operating over a Ricean fading channel, with \cite{RISFAS} developing algorithms for a reconfigurable intelligent surface-assisted FAS and \cite{ULFAS} minimizing the transmit power for a multi-user uplink FAS. To date, there has been minimal analytical research on FAS under Ricean fading, so this is our focus. As a performance metric, we analyze the upper tail of the cumulative distribution function (CDF) of the SNR—specifically, the probability that the SNR exceeds a high threshold, which we define as the high SNR probability (HSP). This choice is motivated by prior work on Rayleigh fading \cite{psomas,smith_dimensional_2025}, where the HSP is the only known metric that allows closed-form performance analysis for CFASs. Building on these results, we extend our previous analysis in \cite{smith_dimensional_2025} to Ricean fading, providing the first analytical results for CFAS performance under Ricean fading.

Significant analytical progress has been made for FASs in Rayleigh environments. For a finite number of discrete antenna positions, full SNR distributions have been derived using approximate correlation models \cite{alouini}, block-correlation models \cite{block}, and copulas \cite{cop1,cop2}. As directly extending these methods to CFASs is challenging, the work in \cite{smith_dimensional_2025} takes a different approach by leveraging random field theory \cite{adler} to evaluate the HSP for CFASs in 1D, 2D, and 3D under Rayleigh fading. A key tool in this analysis is the expected Euler characteristic (EEC) \cite{adler}, which provides an asymptotically exact approximation of the HSP. In this paper, we extend this approach to the more complex case of Ricean fading, where the presence of a deterministic LoS component, and thus a non-zero mean, significantly complicates the analysis. More specifically, we make the following contributions:
\begin{itemize}
    \item We present the first closed-form expression for the Euler characteristic density of a noncentral $\chi_2^2$ random field.
    \item We use this expression to derive an accurate approximation to the HSP, based on the received SNR for 1D, 2D and 3D.
    \item We verify the derived approximations with simulations, and investigate the impact of Ricean K-factor on the HSP.
\end{itemize}

\textit{Notation}: Lower boldface letters represent vectors; $\mathbb{E}[\cdot]$ is the statistical expectation; $\mathrm{Var}[\cdot]$ is the variance; $P(A)$ is the probability of event $A$; $\mathcal{CN}(\mu,\sigma^2)$ is a complex Gaussian distribution with mean $\mu$ and variance $\sigma^2$; $\chi_k^2$ is a central chi-squared distribution with $k$ degrees of freedom; $\chi_k^2(\lambda)$ is a noncentral chi-squared distribution with $k$ degrees of freedom and noncentrality parameter, $\lambda$; $Q_\nu(\cdot,\cdot)$ is the Marcum Q-function of order $\nu$, $J_0(\cdot)$ is the zeroth-order Bessel function of the first kind; $I_0(\cdot)$ is the zeroth-order modified Bessel function of the first kind; $\Gamma(\cdot)$ is the gamma function; $(\cdot)^*$ is the complex conjugate; $(\cdot)^T$ represents transpose; $\lfloor \cdot\rfloor$ is the floor operator; $||\cdot||$ is the Euclidean norm and $\mathrm{dim}(A)$ is the number of dimensions of $A$.

\section{System Model}\label{sec:sysmodel}
Consider an antenna located at the coordinate $\mathbf{t}$ in a set of coordinate points, $A$. Varying the dimensions of $A$ allows a range of antenna scenarios to be considered, from a fixed antenna in 0D to a fluid antenna (FA) able to move to any point within a cuboid in 3D. We consider four antenna structures, detailed in Table \ref{tab:dims} and visually illustrated in Fig. \ref{fig:FAS_layout}.

\begin{table}[ht]
\caption{CFAS layouts considered for dimensions 0 to 3.}
\begin{tabular}{|c|c|c|}
\hline
$\mathrm{dim}(\!A)$ & Scenario & Definitions \\ \hline
0D & Fixed antenna & $\mathbf{t}=0$ \\ \hline
1D & FA on a line of length $T_1$ & $\mathbf{t}=t\in A= [0,T_1]$ \\ \hline
2D & \begin{tabular}[c]{@{}c@{}} FA in a rectangle with\\ side lengths $T_1$ and $T_2$.\end{tabular}& \begin{tabular}[c]{@{}c@{}}$\mathbf{t}=[t_1,t_2]^T\in A$ \\ $A= [0,T_1]\times[0,T_2]$ \end{tabular}\\ \hline
3D & \begin{tabular}[c]{@{}c@{}} FA in a cuboid with side\\  lengths $T_1$, $T_2$ and $T_3$.\end{tabular} & \begin{tabular}[c]{@{}c@{}}$\mathbf{t}=[t_1,t_2,t_3]^T \in A$ \\ $\!A\!=\![0,\!T_1]\!\times\![0,\!T_2]\!\times\![0,\!T_3]$\end{tabular} \\ \hline
\end{tabular}
\label{tab:dims}
\end{table}

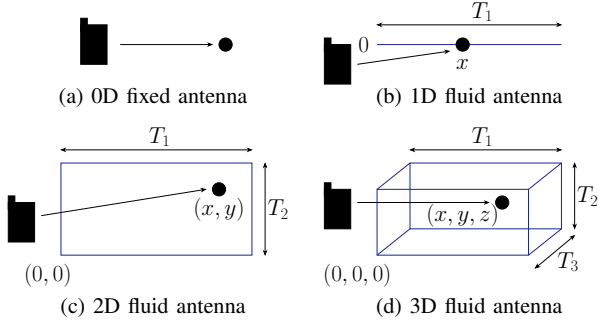
\begin{figure}[!ht]
\centering
\resizebox{0.9\textwidth}{!}{%
\begin{circuitikz}
\tikzstyle{every node}=[font=\Huge]
\draw [ color={rgb,255:red,17; green,39; blue,146} , line width=1pt ] (3,16.75) rectangle (10.25,13.25);
\draw [ fill={rgb,255:red,0; green,0; blue,0} ] (9,15.75) circle (0.25cm);
\node [font=\Huge] at (9,15) {$(x,y)$};
\node [font=\Huge] at (2.5,12.5) {$(0,0)$};
\draw [<->, >=Stealth] (3,17.25) -- (10.25,17.25);
\draw [<->, >=Stealth] (10.75,16.75) -- (10.75,13.25);
\node [font=\Huge] at (6.75,17.75) {$T_1$};
\node [font=\Huge] at (11.25,15) {$T_2$};
\draw [ fill={rgb,255:red,0; green,0; blue,0} ] (1,15.25) rectangle (2,13.75);
\draw [ fill={rgb,255:red,0; green,0; blue,0} ] (1,15) rectangle (1.25,15.5);
\draw [->, >=Stealth] (2.25,14.75) -- (8.5,15.75);
\draw [ color={rgb,255:red,17; green,39; blue,146} , line width=1pt ] (15,15.75) rectangle (20.75,13.25);
\draw [ color={rgb,255:red,17; green,39; blue,146} , line width=1pt ] (16.25,16.75) rectangle (22,14.25);
\draw [ color={rgb,255:red,32; green,35; blue,142}, line width=1pt, short] (15,15.75) -- (16.25,16.75);
\draw [ color={rgb,255:red,32; green,35; blue,142}, line width=1pt, short] (20.75,15.75) -- (22,16.75);
\draw [ color={rgb,255:red,32; green,35; blue,142}, line width=1pt, short] (15,13.25) -- (16.25,14.25);
\draw [ color={rgb,255:red,32; green,35; blue,142}, line width=1pt, short] (20.75,13.25) -- (22,14.25);
\draw [<->, >=Stealth] (16.25,17.25) -- (22,17.25);
\draw [<->, >=Stealth] (22.5,16.75) -- (22.5,14.25);
\draw [<->, >=Stealth] (22.5,14) -- (21,12.75);
\node [font=\Huge] at (23,15.5) {$T_2$};
\node [font=\Huge] at (19,17.75) {$T_1$};
\node [font=\Huge] at (22.25,13) {$T_3$};
\draw [ fill={rgb,255:red,0; green,0; blue,0} ] (19.75,15.25) circle (0.25cm);
\node [font=\Huge] at (18.25,14.75) {$(x,y,z)$};
\draw [ fill={rgb,255:red,0; green,0; blue,0} ] (13,15.75) rectangle (14,14.25);
\draw [ fill={rgb,255:red,0; green,0; blue,0} ] (13,15.5) rectangle (13.25,16);
\draw [->, >=Stealth] (14.25,15.25) -- (19.25,15.25);
\node [font=\Huge] at (14.25,12.5) {$(0,0,0)$};
\draw [ color={rgb,255:red,32; green,35; blue,142} , line width=1pt ] (15,21.25) rectangle (22,21.25);
\draw [ fill={rgb,255:red,0; green,0; blue,0} ] (18.25,21.25) circle (0.25cm);
\draw [<->, >=Stealth] (15,22) -- (22,22);
\node [font=\Huge] at (19,22.5) {$T_1$};
\node [font=\Huge] at (18.25,20.5) {$x$};
\draw [ fill={rgb,255:red,0; green,0; blue,0} ] (13,21.25) rectangle (14,19.75);
\draw [ fill={rgb,255:red,0; green,0; blue,0} ] (13,21) rectangle (13.25,21.5);
\draw [->, >=Stealth] (14.25,20.5) -- (17.75,21);
\node [font=\Huge] at (14.5,21.25) {$0$};
\node [font=\Huge] at (18.5,19.25) {(b) 1D fluid antenna};
\node [font=\Huge] at (6.5,11.25) {(c) 2D fluid antenna};
\node [font=\Huge] at (18.5,11.25) {(d) 3D fluid antenna};
\draw [ fill={rgb,255:red,0; green,0; blue,0} ] (3.75,22) rectangle (4.75,20.5);
\draw [ fill={rgb,255:red,0; green,0; blue,0} ] (3.75,21.75) rectangle (4,22.25);
\draw [->, >=Stealth] (5.25,21.25) -- (8.75,21.25);
\node [font=\LARGE] at (18.75,20.75) {};
\node [font=\LARGE] at (18.75,20.75) {};
\draw [ fill={rgb,255:red,0; green,0; blue,0} ] (9.25,21.25) circle (0.25cm);
\node [font=\Huge] at (6.5,19.25) {(a) 0D fixed antenna
};
\end{circuitikz}
}%
\caption{Fluid antenna geometries for 0 to 3 dimensions.}
\label{fig:FAS_layout}
\end{figure}
\vspace{0.75em}
We consider a correlated Ricean channel from a single antenna source to an $n\in\{0,1,2,3\}$-dimensional FA, denoted by $h^{(n)}(\mathbf{t})$. This channel is comprised of a rank-1 LoS
component and a correlated Rayleigh component, such that
\begin{align}
    h^{(n)}(\mathbf{t}) &= h^{(n)}_\mathrm{LoS}(\mathbf{t}) + h^{(n)}_\mathrm{NLoS}(\mathbf{t}) \notag\\
    & = \sqrt{\frac{\beta\,\kappa}{\kappa+1}}\tilde{h}^{(n)}_\mathrm{LoS}(\mathbf{t}) + \sqrt{\frac{\beta}{2(1+\kappa)}}\tilde{h}^{(n)}_\mathrm{NLoS}(\mathbf{t}) \label{eq:h},
\end{align}
where $\beta$ is the channel gain, $\kappa$ is the Ricean K-factor, $\tilde{h}^{(n)}_\mathrm{LoS}(\mathbf{t})=a^{(n)}(\mathbf{t})$, $\tilde{h}^{(n)}_\mathrm{NLoS}(\mathbf{t})\sim\mathcal{CN}(0,2)$, and $a^{(n)}(\mathbf{t})$ is a LoS steering function of the form $\exp{(j\mathbf{c}_n^T\mathbf{t})}$, where
\begin{align}
    {a}^{(0)}(\mathbf{t}) &= 1, \notag \\
    {a}^{(1)}(\mathbf{t}) &= \mathrm{e}^{2\pi jt\sin(\phi)\sin(\theta)},\notag \\
    {a}^{(2)}(\mathbf{t}) &= \mathrm{e}^{2\pi j(t_1\sin(\phi)\sin(\theta) + t_2\cos(\theta))},\notag \\
    {a}^{(3)}(\mathbf{t}) &= \mathrm{e}^{2\pi j(t_1\cos(\phi)\sin(\theta) + t_2\sin(\phi)\sin(\theta) +t_3\cos(\theta))}.\notag
\end{align}
The angles in the steering vectors are the angles of azimuth, $\phi$,  and elevation, $\theta$, of the LoS ray. Note that the real and imaginary components of $h^{(n)}_\mathrm{NLoS}(\mathbf{t})$ both have unit power to simplify the use of random field theory in Section \ref{sec:analysis}. We assume that the set $A$ is small enough so that $\beta$ is constant across $A$. Therefore, the received signal at $\mathbf{t}$ is \begin{equation}\label{rx}
    r(\mathbf{t}) = h^{(n)}(\mathbf{t})s + \nu,
\end{equation}
where $s$ is the transmitted signal with $\mathbb{E}[|s|^2]=E_s$ and $\nu\sim\mathcal{CN}(0,\sigma^2)$ is the additive white Gaussian noise.

As motivated in Section \ref{sec:intro}, in this work, we consider a CFAS employing MF that can move to the point with the highest SNR in a multidimensional continuous space. From \eqref{eq:h} and \eqref{rx}, the optimal SNR for a perfect CFAS is
\begin{align}
    \mathrm{SNR}^{(n)}\!&=\!\frac{\beta E_s}{2(\kappa+1)\sigma^2}\sup_{\mathbf{t}\in A}\left\{|\sqrt{2\kappa}\tilde{h}^{(n)}_\mathrm{LoS}(\mathbf{t}) +\tilde{h}^{(n)}_\mathrm{NLoS}(\mathbf{t})|^2\!\right\}\notag\\
    &=\frac{\beta E_s}{2(\kappa+1)\sigma^2}\sup_{\mathbf{t}\in A}\left\{X^{(n)}(\mathbf{t})\right\}\!,
\end{align}
where $X^{(n)}(\mathbf{t})$ is a noncentral chi-squared process ($\chi_2^{2}(\lambda)$) with noncentrality parameter $\lambda = 2\kappa$.


The probability of an SNR exceeding some high threshold (HSP), is denoted as $P^{(n)}_{hs} = P(\mathrm{SNR}^{(n)}>u)$. Defining the normalized threshold, $x=\frac{2(\kappa+1)\sigma^2 u}{\beta E_s}$, gives
\begin{equation}
    P^{(n)}_{hs} = P\left(\sup_{\mathbf{t}\in A}\left\{X^{(n)}(\mathbf{t})\right\} \geq x\right). \label{eq:Ppp}
\end{equation}
Hence, the HSP is converted to a problem relating to the supremum of a noncentral $\chi_2^2$ process.

The correlation of $\tilde{h}_\mathrm{NLoS}^{(n)}(\mathbf{t})$ over $A$ is a key physical factor affecting the performance of a CFAS system. As is conventional in the FAS space \cite{kkwong1,kkwong2,FAS_6G,psomas}, we assume isotropic correlation of the form
\begin{equation}
    \rho(\tau) = \frac{1}{2}\mathbb{E}\left[\tilde{h}^{(n)}_\mathrm{NLoS}(\mathbf{t})\tilde{h}^{(n)*}_\mathrm{NLoS}(\mathbf{t+\Delta})\right],
\end{equation}
where $\tau=||\mathbf{\Delta}||$ is the Euclidean separation between $\mathbf{t}$ and $\mathbf{t+\Delta}$. Additionally, we assume that 
\begin{equation}\label{lambda2}
\rho(\tau)\sim 1 -a\tau^2 \quad \mathrm{as} \quad \tau \rightarrow 0, 
\end{equation}
to ensure a mean-square differentiable channel. Distances in this work (i.e, $\tau$, $T_i$, $i=1,2,3$) are measured in wavelengths.
\vspace{-0.75em}
\section{Analysis}\label{sec:analysis}
In this section, we outline the known HSP result for a fixed antenna in Ricean fading and use existing methods to derive the HSP for a 1D CFAS experiencing Ricean fading. However, the methods used do not allow for an extension to higher dimensions, so we develop an accurate and much more general approximation to the HSP for CFASs using random field theory that can be applied to all scenarios in Table \ref{tab:dims}. These results are valid for arbitrary channel parameter values.
\vspace{-0.75em}
\subsection{Zero Dimensions (Fixed Antenna)}
For a fixed antenna, $X^{(0)}(t)=X^{(0)}(0)$ as $t\in A = \{0\}$ is a fixed point. Therefore, the HSP is the complementary CDF (CCDF) of a noncentral $\chi^2_2$ process, which is known to be
\begin{equation}
    P_{hs}^{(0)} =  Q_1\big(\sqrt{\lambda}, \sqrt{x}\big)= Q_1\big(\sqrt{2\kappa}, \sqrt{x}\big). 
    \label{eq:0D}
\end{equation}
This is a simple Ricean fading result at a single point.

\subsection{One Dimension}
\label{sec:1D}
For the case of a 1D CFAS, where the antenna can move to any point along a line of length $T_1$, the HSP is
\begin{equation}\label{eq1d}
    P_{hs}^{(1)} = P\bigg(\sup_{{t}\in [0,T_1]}\left\{X^{(1)}({t})\right\} \geq x\bigg).
\end{equation}
One way to find an asymptotically exact solution for \eqref{eq1d} is to use the method detailed in \cite{davies}. This is applied to the HSP for a Rayleigh fading environment in Appendix C of \cite{psomas}.  For the Ricean case, this method gives

\begin{equation}
\label{eq:1DLCR}
    \hat{P}_{hs}^{(1)} =P\big(X^{(1)}(0)>x\big) \!+ T_1\times\mathrm{LCR}(x),
\end{equation}
where $\mathrm{LCR}(x)$ is the level crossing rate (LCR) of $X^{(1)}(t)$ across $x$ and $\hat{P}_{hs}^{(n)}$ denotes an asymptotic approximation of ${P}_{hs}^{(n)}$ based on LCR theory. From \cite{cheng_second_2009}, the LCR is given by
\begin{align}
    \label{eq:LCR}
    \!\!\mathrm{LCR}(x)&\!= 2\sqrt{\frac{x}{\pi}}\mathrm{e}^{-(\kappa+x/2)} \!\int_0^{\pi/2}\!\cosh\!\left(\sqrt{2\kappa x}\cos\!\left(\psi\right)\!\right) \notag \\
    &\times\! \Big(\!\mathrm{e}^{2\kappa\sin^2(\phi)\sin^2(\theta)\sin^2(\psi)}\!+\!\sqrt{2\kappa\pi}j\sin(\phi)  \sin(\theta)\notag \\ &\times\!\sin(\psi)\,\!\mathrm{erf}\!\left(\!\sqrt{2\kappa}j \sin(\phi)\sin(\theta)\sin(\psi)\!\right)\!\!\Big)d\psi.
 \end{align}
Substituting \eqref{eq:LCR} and the well-known Ricean result,  $P(X^{(1)}(0)>x)=Q_1(\sqrt{2\kappa}, \sqrt{x})$, into \eqref{eq:1DLCR} gives the HSP approximation as desired.

While this method is intuitive for 1D, the integral cannot be solved analytically, and the concept of level crossings is not easily extended to higher dimensions. Therefore, we must use a different method to derive closed-form expressions for the HSP in higher dimensions.

\subsection{Arbitrary Numbers of Dimensions}
\label{subsec:arbitrarydims}
Commonly used in random field theory and applied in a Rayleigh fading CFAS context in \cite{smith_dimensional_2025}, the EEC is a useful tool for approximating ${P_{hs}^{(n)}}$. In the 1D scenario in Section \ref{sec:1D}, the LCR counts the number of threshold exceedances across a line of length $T_1$. The EEC is a more general topological implementation of this concept, providing a normalized count of exceedance regions across an arbitrary-dimensional space. 

In the Rayleigh fading work in \cite{smith_dimensional_2025}, the EEC for a $\chi^2_2$ process was shown to give an asymptotically exact approximation for $P_{hs}^{(n)}$ as $u\rightarrow\infty$. For a Ricean fading scenario, we instead apply the EEC for a noncentral $\chi^2_2$ process. However, an important assumption of the EEC in its standard form is that the underlying random field has stationary statistics and thus its mean and covariance structure do not vary in space. Such processes are of the form 
\begin{equation}\label{Y}
    Y(\mathbf{t}) = |\mu + u(\mathbf{t})|^2,
\end{equation}
where $\mu$ is constant and $u(\mathbf{t})$ is a spatially correlated zero-mean Gaussian process with independent mean and imaginary parts. Now, the Ricean channel in \eqref{eq:h} leads to the $X^{(n)}(\mathbf{t})$ process, with the slightly different form
\begin{equation}\label{Xoriginal}
    X^{(n)}(\mathbf{t}) = |\mu(\mathbf{t}) + u(\mathbf{t})|^2, 
\end{equation}
where the mean, $\mu(\mathbf{t}) = \sqrt{2\kappa} a^{(n)}(\mathbf{t})=\sqrt{2\kappa}\exp{(j\mathbf{c}_n^T\mathbf{t})}$, has a constant amplitude but the phase varies with the FA position. All other assumptions required for the random field theory are satisfied. Rearranging \eqref{Xoriginal} gives the alternative version
\begin{equation}\label{Xnew}
    X^{(n)}(\mathbf{t}) = |\sqrt{2\kappa}+\exp{(-j\mathbf{c}_n^T\mathbf{t})} u(\mathbf{t})|^2.
\end{equation}
In \eqref{Xnew}, the LoS term is constant (i.e. in the correct form), but the modified Gaussian process, $\exp{(-j\mathbf{c}_n^T\mathbf{t})} u(\mathbf{t})$, has a correlation structure that does not exactly match the assumptions in \eqref{Y} due to the mixing of real and imaginary parts. Hence, the precise assumptions of random field theory are closely but not exactly followed.

To motivate the continued use of the EEC approximation in this case, we note that \eqref{Xoriginal} satisfies all the assumptions except for a non-constant LoS phase, and that all assumptions are satisfied at broadside (as $h_\mathrm{LoS}$ is constant) and when $\kappa=0$. Additionally, numerical results in Section \ref{sec:numresults} show the accuracy of the approach.

From \cite[eq. 15.10.1]{adler}, the EEC is defined as
\begin{equation}
    \mathrm{EEC} = \sum_{j=0}^{\mathrm{dim}(A)} L_j(A)\rho_j(x), \label{eq:EEC}
\end{equation}
where $L_j(A)$ are the Lipschitz-Killing curvatures of $A$ and $\rho_j(x)$ are the Euler Characteristic (EC) densities. The EEC is used to approximate the HSP in the normal way and this approximation is denoted by $\tilde{P}_{hs}^{(n)} =\textrm{EEC}$, where $n=\mathrm{dim}(A)$. The following result from Lemma 5.1 of \cite{taylor_gaussian_2006} gives the EC densities of a noncentral $\chi^2_2$.

\begin{res}
\label{res:ECdensities}
The EC densities for a noncentral $\chi_2^{2}(\lambda)$ random field are given by
\begin{equation}
    \rho_0(x) = P\left(\chi_2^{2}(\lambda) \geq x\right), \label{eq:rho0}
\end{equation}
\vspace{-1em}
\begin{multline}
\label{eq:rhoj}
    \rho_j(\lambda,x)\!=\!\frac{\mathrm{e}^{-\frac{\lambda+x}{2}}}{(2\pi)^{j/2}}\!\sum\limits_{i=0}^\infty\!\frac{\lambda^ix^{1+i-j/2}}{4^ii!\Gamma(1\!+\!i)}\!\sum\limits_{l=0}^{\lfloor\!\frac{j\!-\!1}{2}\!\rfloor}\sum\limits_{m=0}^{j\!-\!1\!-\!2l}\!\!\!1_{\{\!2\geq j-m-2l-2i\!\}}\\ \times\binom{1+2i}{j\!-\!1\!-\!m\!-\!2l\!}\!\frac{(-1)^{j-1+m+l}(j\!-\!1)!x^{m+l}}{m!\,l!\,2^l}\!.
\end{multline}
\end{res}

Although exact, the infinite sum in \eqref{eq:rhoj} complicates computation and interpretation. Therefore, we reformulate \eqref{eq:rhoj} into a simpler closed-form expression in the following Lemma. To our knowledge, this provides the first closed-form expression for the EC density of a noncentral $\chi_2^2$ random field, enabling a more direct analysis of the HSP.

\begin{lem}
\label{lem:genform}
    The result in \eqref{eq:rhoj} can be simplified to the closed-form expression 
    \begin{align}
    \label{eq:rhojj}
    \rho_j(&\lambda,x)\!=\!\frac{\mathrm{e}^{-\frac{\lambda+x}{2}}x^{\frac{1-j}{2}}(j\!-\!1)!}{(2\pi)^{j/2}}\sum\limits_{l=0}^{\lfloor\frac{j-1}{2}\rfloor}\sum\limits_{m=0}^{j-1-2l}\frac{(-1)^{j-1+m+l}}{m!\,l!\,r!\,2^{l+r-1}} \notag \\ &\times x^{m+l+r/2}\lambda^{r/2} \bigg(\!\frac{\sqrt{\lambda x}}{2}I_r\left(\sqrt{\lambda x}\right)+\sum\limits_{t=0}^{r-1}\bigg(r\binom{r-1}{t} \notag \\ & \times I_{2t-r+1}\left(\sqrt{\lambda x}\right)+\frac{\sqrt{\lambda x}}{2}\binom{r}{t}I_{2t-r}\left(\sqrt{\lambda x}\right)\!\bigg)\!\bigg).
    \end{align}
\end{lem}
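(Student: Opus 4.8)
The plan is to reduce \eqref{eq:rhoj} to a single power series in $\lambda x$ for each fixed pair $(l,m)$ and then resum that series in closed form using modified Bessel functions. First I would set $r=j-1-m-2l$ (the lower index of the binomial coefficient) and observe that the indicator $1_{\{2\ge j-m-2l-2i\}}$ is redundant: since $j-m-2l-2i=r+1-2i$, the constraint reads $i\ge (r-1)/2$, which is exactly the condition for $\binom{1+2i}{r}$ to be nonzero. Dropping the indicator and using $\Gamma(1+i)=i!$, the whole $\lambda,x$-dependence for fixed $(l,m)$ collapses into the single sum
\[
G(r)=\sum_{i\ge0}\frac{(\lambda x)^i}{4^i(i!)^2}\binom{1+2i}{r},
\]
so that it only remains to express $G(r)$ through Bessel functions.

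The key step is a generating-function manipulation. Writing $\binom{1+2i}{r}=[y^r](1+y)^{1+2i}$ and interchanging the (absolutely convergent) $i$-sum with the coefficient extraction gives
\[
G(r)=[y^r]\,(1+y)\sum_{i\ge0}\frac{1}{(i!)^2}\Big(\tfrac{\lambda x(1+y)^2}{4}\Big)^i=[y^r]\,(1+y)\,I_0\big(z(1+y)\big),\qquad z=\sqrt{\lambda x},
\]
where I used $\sum_i w^i/(i!)^2=I_0(2\sqrt w)$. Taylor-expanding $I_0(z(1+y))$ in $y$ about $y=0$ and reading off the coefficient of $y^r$ yields the compact derivative representation
\[
G(r)=\frac{z^r}{r!}I_0^{(r)}(z)+\frac{z^{r-1}}{(r-1)!}I_0^{(r-1)}(z).
\]

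To finish I would invoke the classical identity $I_0^{(k)}(z)=2^{-k}\sum_{t=0}^{k}\binom{k}{t}I_{2t-k}(z)$, which follows by induction from $I_\nu'=\tfrac12(I_{\nu-1}+I_{\nu+1})$ and $I_{-n}=I_n$. Substituting this for both derivatives, factoring out $z^{r-1}/(r!\,2^{r-1})$, and peeling the $t=r$ term $\tfrac{z}{2}I_r(z)$ off the first sum reproduces precisely the bracketed combination $\Phi_r(z)$ appearing in \eqref{eq:rhojj}. Reinserting $G(r)$ into the double sum, restoring the prefactor $\mathrm{e}^{-(\lambda+x)/2}(j-1)!/(2\pi)^{j/2}$ and the powers of $x$ and $\lambda$ (using $z^{r-1}=(\lambda x)^{(r-1)/2}$), and noting that the $(l,m)$-ranges are unchanged then gives \eqref{eq:rhojj}.

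I expect the main obstacle to be the Bessel bookkeeping in the last step: correctly tracking the two derivative expansions of different orders ($r$ and $r-1$), aligning their Bessel indices $2t-r$ and $2t-r+1$, and matching the half-integer powers of $x$ and $\lambda$ against the monomial prefactor so that the collected expression is exactly \eqref{eq:rhojj}. Verifying the interchange of summation and coefficient extraction (justified by the entire-function convergence of the $I_0$ series) and checking the low-order cases $r=0,1$ as a sanity test would secure the argument.
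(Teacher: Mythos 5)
Your proposal is correct and follows essentially the same route as the paper: drop the redundant indicator, collapse the inner $i$-sum for fixed $(l,m)$ to $\tfrac{z^{r}}{r!}I_0^{(r)}(z)+\tfrac{z^{r-1}}{(r-1)!}I_0^{(r-1)}(z)$ with $z=\sqrt{\lambda x}$ (the paper reaches this by recognizing the series as $\tfrac{y^{r-1}}{r!}\tfrac{d^r}{dy^r}\{y\,I_0(2y)\}$ and applying Leibniz, you by coefficient extraction from $(1+y)I_0(z(1+y))$ --- the same identity), and finish with the standard expansion of $I_0^{(k)}$ into $I_{2t-k}$. The only caveat is that faithfully carrying your $z^{r-1}=(\lambda x)^{(r-1)/2}$ through the prefactors yields $\lambda^{(r-1)/2}$ rather than the $\lambda^{r/2}$ printed in \eqref{eq:rhojj} (the printed power is inconsistent with \eqref{eq:1D} at $j=1$, where the correct density is $\rho_1=\tfrac{\mathrm{e}^{-(\lambda+x)/2}\sqrt{x}}{\sqrt{2\pi}}I_0(\sqrt{\lambda x})$ without an extra $\sqrt{\lambda}$), so the residual mismatch is a typo in the lemma statement rather than a gap in your argument.
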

\begin{proof}
    See Appendix A.
\end{proof}
To compute the Killing-Lipschitz curvatures, we use the result in \cite[p. 324, p. 333]{adler},
\begin{equation}
    L_j(A) = \lambda_2^{j/2}L_j^E(A), \label{eq:LjA}
\end{equation}
where $L_j^E(A)$ are the Euclidean intrinsic volumes defined in Table \ref{tab:euclideanintrinsicvalues} \cite{adlerpdf} and $\lambda_2$ is the variance of the channel derivative. In isotropic fading, the derivative can be taken in any spatial direction. When $\rho(\tau)=J_0(2\pi\tau)$ as considered in this work, $\lambda_2=2\pi^2$ \cite{foschini}. We now derive the key results of the paper.

\vspace{-0.4em}
\begin{table}[ht!]
\centering
\caption{Euclidean Intrinsic Volumes}
\label{tab:euclideanintrinsicvalues}
\renewcommand{\arraystretch}{1.1}
\begin{tabular}{|c|c|c|}
\hline
\multicolumn{1}{|c|}{Dimension} & \multicolumn{1}{c|}{Intrinsic Volume}  & \multicolumn{1}{c|}{Physical Meaning} \\ \hline
0D & $L_0^E(A)=1$ & \\ \hline
1D & \begin{tabular}[c]{@{}c@{}}$L_0^E(A)=1$\\ $L_1^E(A)=T_1$\end{tabular} & \begin{tabular}[c]{@{}c@{}} \\ Length \end{tabular} \\ \hline
2D & \begin{tabular}[c]{@{}c@{}}$L_0^E(A)=1$\\ $L_1^E(A)=T_1+T_2$ \\ $L_2^E(A)=T_1T_2$ \end{tabular} &  \begin{tabular}[c]{@{}c@{}} \\ $\tfrac{1}{2}\!\times$Boundary Length \\ Area \end{tabular} \\ \hline
3D & \begin{tabular}[c]{@{}c@{}}$L_0^E(A)=1$\\ $L_1^E(A)=T_1+T_2+T_3$ \\ $L_2^E\!(A)\!\!=\!\!T_1T_2\!+\!T_1T_3\!+\!T_2T_3$ \\ $L_3^E(A)=T_1T_2T_3$ \end{tabular} & \begin{tabular}[c]{@{}c@{}} \\ $2\times$Caliper Diameter \\ $\tfrac{1}{2}\times\,$Surface Area \\ Volume \end{tabular} \\ \hline 
\end{tabular}
\end{table}

\begin{lem}
\label{lem:HSP}
In 0 dimensions (a fixed antenna),
\begin{equation}
    P_{hs}^{(0)} = Q_1\left(\sqrt{2\kappa}, \sqrt{x}\right). \label{eq:0D2}
\end{equation}
In 1 dimension,
\begin{equation}
    \tilde{P}_{hs}^{(1)}\!\approx\! Q_1\left(\!\sqrt{2\kappa}, \sqrt{x}\right)\!+ \mathrm{e}^{-(\kappa+x/2)}\,T_1\sqrt{\frac{\lambda_2x}{2\pi}}I_0\!\left(\!\sqrt{2\kappa x}\right)\!.
    \label{eq:1D}
\end{equation}
In 2 dimensions,
\begin{align}
\label{eq:2D}
    & \tilde{P}_{hs}^{(2)}\!\approx Q_1\!\!\left(\!\sqrt{2\kappa},\! \sqrt{x}\right)\!+\! \mathrm{e}^{-(\kappa+x/2)}\!\sqrt{\!\frac{\lambda_2}{2\pi}}\! \bigg[\!\bigg(\!\!(T_1\!+\!T_2)\sqrt{x}\!+\! T_1T_2\notag \\ &\,\,\times\!\sqrt{\!\frac{\lambda_2}{2\pi}}(x\!-\!1)\!\!\bigg) I_0\!\left(\!\sqrt{2\kappa x}\right)\!-\!T_1T_2\sqrt{\frac{\!\lambda_2\kappa x}{\pi}}I_1\!\left(\!\sqrt{2\kappa x}\right)\!\! \bigg]\!.
\end{align}
The 3D HSP, $ \tilde{P}_{hs}^{(3)}$, is stated in \eqref{eq:3D} at the bottom of the page.
\begin{figure*}[b]
\vspace{-0.5em}
\hrulefill
\normalsize
\begin{multline}
\label{eq:3D}
     \tilde{P}_{hs}^{(3)}\!\approx Q_1\!\!\left(\!\sqrt{2\kappa}, \!\sqrt{x}\right)\!+\!\mathrm{e}^{-(\kappa+x/2)}\sqrt{\frac{\lambda_2}{2\pi}}\bigg[\!\bigg(\!(T_1\!+\!T_2\!+\!T_3)\sqrt{x}\!+\!\sqrt{\frac{\lambda_2}{2\pi}}\bigg(\!(T_1T_2\!+\!T_1T_3\!+\!T_2T_3)(x\!-\!1)\!+\!T_1T_2T_3\sqrt{\frac{\lambda_2x}{2\pi}}(x\!+\!\kappa\!-\!3)\!\bigg)\!\!\bigg) \\ \times\! I_0\!\left(\!\sqrt{2\kappa x}\right)\!+ \!\sqrt{\frac{\kappa\lambda_2}{\pi}}\bigg(\!T_1T_2T_3\sqrt{\frac{2\lambda_2}{\pi}}(1-x)\!-\!(T_1T_2\!+\!T_1T_3\!+\!T_2T_3)\sqrt{x}\bigg)I_1\!\left(\!\sqrt{2\kappa x}\right)\!+\!T_1T_2T_3\frac{\lambda_2\kappa\sqrt{x}}{2\pi}I_2\!\left(\!\sqrt{2\kappa x}\right)\!\!\bigg].
\end{multline}
\end{figure*}
\end{lem}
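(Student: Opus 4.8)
The plan is to specialize the EEC \eqref{eq:EEC} to each dimension, using $\tilde{P}_{hs}^{(n)} = \mathrm{EEC} = \sum_{j=0}^{n} L_j(A)\rho_j(x)$, and to assemble the two ingredients separately before combining. The Lipschitz--Killing curvatures follow at once from \eqref{eq:LjA} and Table \ref{tab:euclideanintrinsicvalues} as $L_j(A) = \lambda_2^{j/2} L_j^E(A)$, with the $L_j^E(A)$ read off for $n = 0,1,2,3$. The EC densities come from Result \ref{res:ECdensities} and Lemma \ref{lem:genform}: setting $\lambda = 2\kappa$, the zeroth density is $\rho_0(x) = P(\chi_2^2(2\kappa) \ge x) = Q_1(\sqrt{2\kappa},\sqrt{x})$, which already proves \eqref{eq:0D2} and supplies the leading term common to every dimension.

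Next I would evaluate $\rho_j$ for $j = 1,2,3$ from the closed form \eqref{eq:rhojj}. For small $j$ the double sum over $l$ and $m$ collapses to a handful of terms (one for $j=1$, two for $j=2$, four for $j=3$), and the only Bessel orders that survive are $I_0,\dots,I_{j-1}$. Carrying this out, and writing the common prefactor $e^{-(\kappa+x/2)}/(2\pi)^{j/2}$, I expect $\rho_1 \propto \sqrt{x}\,I_0(\sqrt{2\kappa x})$; $\rho_2 \propto (x-1)I_0(\sqrt{2\kappa x}) - \sqrt{2\kappa x}\,I_1(\sqrt{2\kappa x})$; and $\rho_3 \propto \sqrt{x}\,(x+\kappa-3)I_0(\sqrt{2\kappa x}) + 2\sqrt{2\kappa}\,(1-x)I_1(\sqrt{2\kappa x}) + \kappa\sqrt{x}\,I_2(\sqrt{2\kappa x})$. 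Equivalently, one can bypass \eqref{eq:rhojj} and reduce the series of Result \ref{res:ECdensities} directly, since $\sum_{i} (\kappa x/2)^i / (i!)^2 = I_0(\sqrt{2\kappa x})$ and its weighted moments $\sum_i (\kappa x/2)^i i^k / (i!)^2$ are generated from $I_0,I_1,I_2$ through the recurrences $I_0' = I_1$ and $I_1' = \tfrac12(I_0+I_2)$.

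Finally I would substitute these into $\sum_{j=0}^{n} \lambda_2^{j/2} L_j^E(A)\,\rho_j(x)$ and collect by Bessel order. For $n=1$ only $\rho_0$ and $\rho_1$ contribute, giving \eqref{eq:1D} immediately. For $n=2$, and especially $n=3$, a given Bessel function receives contributions from several densities: in 3D the coefficient of $I_0(\sqrt{2\kappa x})$ gathers the $(T_1+T_2+T_3)\sqrt{x}$ piece of $\rho_1$, the $(T_1T_2+T_1T_3+T_2T_3)(x-1)$ piece of $\rho_2$, and the $T_1T_2T_3(x+\kappa-3)$ piece of $\rho_3$, each weighted by the matching power $\lambda_2^{j/2}$.

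The main obstacle is thus not any individual density but the bookkeeping of this merge: regrouping the $I_0$, $I_1$ and $I_2$ contributions across $j$, extracting the common factor $e^{-(\kappa+x/2)}\sqrt{\lambda_2/(2\pi)}$, and verifying that the residual powers of $\lambda_2$, $x$ and $\kappa$ assemble into the nested bracketed form of \eqref{eq:3D}. As consistency checks I would confirm the Rayleigh limit $\kappa\to0$ (where $I_0\to1$, $I_{\ge1}\to0$, recovering the expressions of \cite{smith_dimensional_2025}) and the agreement of the 1D result with the asymptotic LCR expression \eqref{eq:1DLCR}.
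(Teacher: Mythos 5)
Your proposal follows exactly the paper's route: the proof there is simply to combine the EEC formula \eqref{eq:EEC} with the Lipschitz--Killing curvatures from \eqref{eq:LjA} and Table \ref{tab:euclideanintrinsicvalues} and the closed-form EC densities of Lemma \ref{lem:genform} with $\lambda=2\kappa$, then simplify. Your intermediate densities $\rho_1\propto\sqrt{x}\,I_0$, $\rho_2\propto(x-1)I_0-\sqrt{2\kappa x}\,I_1$ and $\rho_3\propto\sqrt{x}(x+\kappa-3)I_0+2\sqrt{2\kappa}(1-x)I_1+\kappa\sqrt{x}\,I_2$ are correct and reproduce \eqref{eq:1D}--\eqref{eq:3D}, so the proposal is correct and essentially identical to the paper's (if more explicit) argument.
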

\vspace{-1.75em}
\begin{proof}
Combining \eqref{eq:EEC} and the EC density stated in \eqref{eq:rho0} and defined in \eqref{eq:0D} gives the result for $P_{hs}^{(0)}$ in \eqref{eq:0D2}. Combining \eqref{eq:EEC} and \eqref{eq:rhojj} for the relevant $\mathrm{dim}(A)$, substituting $\lambda = 2\kappa$ and simplifying the resulting expression gives $ \tilde{P}_{hs}^{(1)},  \tilde{P}_{hs}^{(2)}$ and $ \tilde{P}_{hs}^{(3)}$ in \eqref{eq:1D}-\eqref{eq:3D}, respectively.
\end{proof}

Remarkably, despite the infinite summation in the EC density expression in \eqref{eq:rhoj}, the HSP for all four dimensions considered can be simplified to a closed-form expression of elementary functions, the well known modified Bessel function and a single Marcum-Q function. The equations defined are not only powerful tools to approximate an otherwise intractable problem, but are also fast and simple to compute.  

As conceptually discussed earlier in this section, the approximate result for the 1D scenario in \eqref{eq:1D} is asymptotically exact when the antenna is broadside to the UE ($\phi=0$). Here, the LCR expression in \eqref{eq:LCR} simplifies to
\begin{align}
   \underset{\phi=0}{\mathrm{LCR}}(x)&=2\sqrt{\frac{x}{\pi}}\mathrm{e}^{-(\kappa+x/2)}\!\int_0^{\pi/2}\!\cosh\!\left(\sqrt{2\kappa x}\cos\!\left(\psi\right)\right)d\psi,\! \notag \\
   &=\sqrt{x}\,\mathrm{e}^{-(\kappa+x/2)}I_0\!\left(\sqrt{2\kappa x}\right). \label{eq:LCRphi0}
\end{align}
Combining \eqref{eq:1DLCR} and \eqref{eq:LCRphi0} yields the same result as \eqref{eq:1D}. This arises because, at $\phi=0$, the incoming signal is perpendicular to all points in the FAS, thereby eliminating phase differences between locations. Consequently, $X^{(n)}(\mathbf{t})$  takes the form in \eqref{Y}, satisfying the assumptions of random field theory underlying the EEC method.  This leads to the same result as the asymptotically exact LCR method. Similarly, setting $\kappa=0$ in \eqref{eq:1DLCR} and \eqref{eq:1D} gives an identical result to the Rayleigh fading result in \cite[eq. 19]{smith_dimensional_2025} as expected.

\section{Numerical Results}\label{sec:numresults}
This section validates the analytical results and examines the system behavior. As outlined in Section II, the correlation over distance \( \tau \) follows the classical Jakes’ model, where \( \rho(\tau) = J_0(2\pi\tau) \) \cite{foschini}. While this model is used in simulations, the analysis applies to all correlation models of the form \( \rho(\tau) \sim 1 - a\tau^2 \) as \( \tau \to 0 \). In simulations, $\frac{\beta E_s}{\sigma^2}{=}1$ for simplicity, and channels are generated at intervals of $0.01\lambda$ in each dimension within the available antenna space to determine the location with the highest SNR, with \( 10^6 \) replicates produced.
\subsection{High SNR Probability for an $n$-dimensional FA}
Figure \ref{fig:alldims} validates the analytical HSP results in Lemma \ref{lem:HSP}. The logarithmically scaled complementary CDF is plotted for a CFAS with $n=\{0, 1, 2, 3\}$ dimensions, where $T=T_1=T_2=T_3=0.25\lambda$.  As in \cite{smith_dimensional_2025}, the small value of $T=0.25\lambda$ was used for these results due to the computational challenges of generating channels that sufficiently sample the antenna space in the simulated 3D scenario. This highlights the relevance and usefulness of the analytical expressions. For all dimensions, $\phi=\frac{\pi}{4}$ and both $\kappa=0.2$ and 2 are considered.

\begin{figure}[ht]
    \centering
    \vspace{-0.25em}
    \includegraphics[scale=0.55]{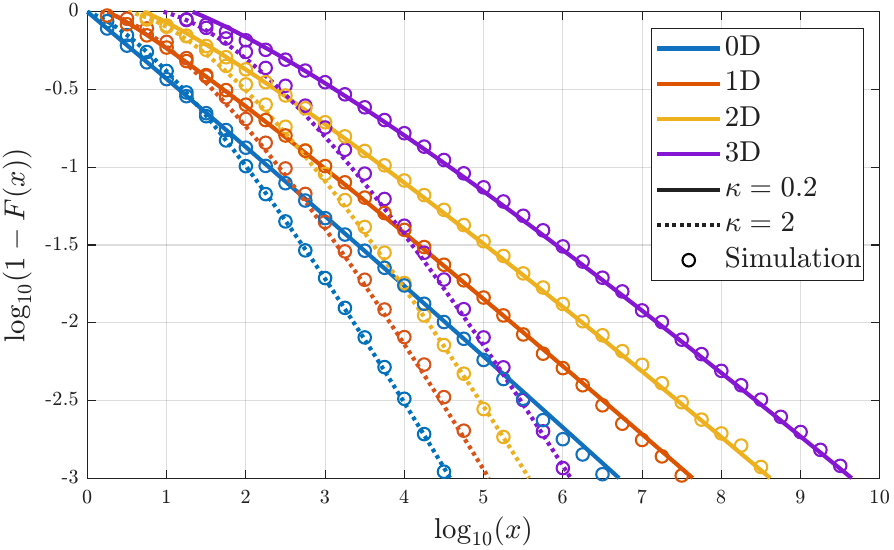}
    \caption{A comparison of the analytical and simulated HSPs for 0-3 dimension CFASs, where each dimension is set to 0.25$\lambda$.}
    \label{fig:alldims}
\end{figure}

Figure \ref{fig:alldims} demonstrates excellent agreement between the simulated results and the analytical HSP expressions in \eqref{eq:0D2}-\eqref{eq:3D}, validating the effectiveness of the EEC approach. The results show that $\log_{10}(P_{hs})$ increases approximately linearly with the number of dimensions. Therefore, the $n$-th dimension scales the HSP by a scalar multiplicative factor. As expected, the HSP performance deteriorates significantly as $\kappa$ increases. The 3D HSP is 0.031 at $\kappa=0.2$ but 0.0012 at $\kappa=2$, an order of magnitude lower. A stronger LoS component makes the channel more deterministic, reducing the spatial variations and consequently limiting the performance gains achievable at the optimal position. As a result, the performance improvements from increasing the number of dimensions are more pronounced for lower $\kappa$, where each additional dimension introduces greater channel variability.
\vspace{-0.2em}
\subsection{Relationship between FA size and Ricean K-factor}
Figure \ref{fig:size} examines the increase in the area of a 2D square FAS required for a Ricean environment to achieve an HSP performance equivalent to that of a 2D square FAS in a Rayleigh environment, illustrating the compensation in size needed to counteract the reduced fading diversity. We numerically solve the expression $\tilde{P}_{hs,\mathrm{Rice}}^{(2)}-\tilde{P}_{hs,\mathrm{Ray}}^{(2)} =0$ for $T_\mathrm{Rice}$, where $\tilde{P}_{hs,\mathrm{Rice}}^{(2)}$ is defined in \eqref{eq:2D} with $T_\mathrm{Rice}=T_1=T_2$ and 
\begin{multline}
    \tilde{P}_{hs,\mathrm{Ray}}^{(2)}= Q_1\!\left(0,\sqrt{x_0}\right) + \mathrm{e}^{-x_0/2} \\ \times \bigg[\!\sqrt{\frac{2\lambda_2}{\pi}}T_\mathrm{Ray}\sqrt{x_0}+T_\mathrm{Ray}^2\frac{\lambda_2}{2\pi}(2x_0-1)\bigg],
\end{multline}
which is \eqref{eq:2D} with $\kappa=0$ and $T_\mathrm{Ray}=T_1=T_2$. We assume $T_\mathrm{Ray}$ is fixed, $x_0 = \frac{2\sigma^2 u}{\beta E_s}$ and $x$ is set so the HSP is 0.01.
\vspace{0.4em}
\begin{figure}[t]
    \centering
    \includegraphics[scale=0.55]{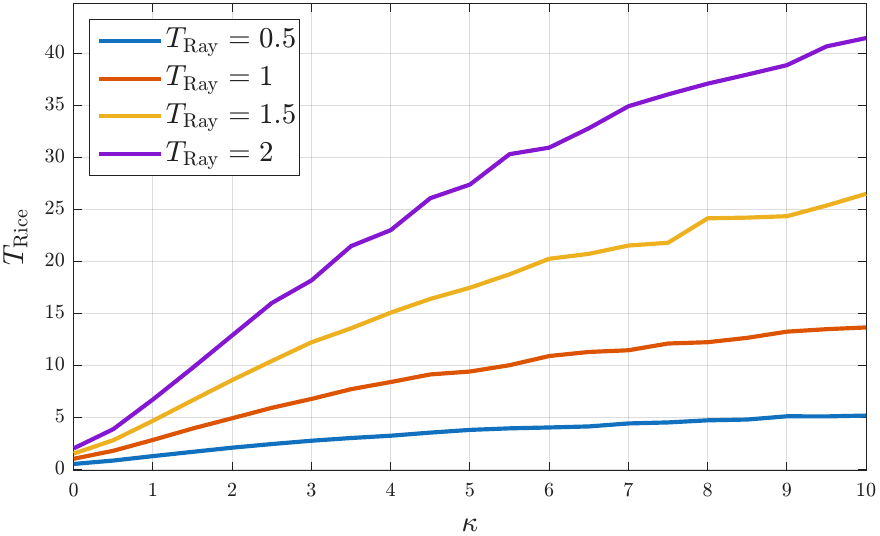}
    \caption{The side length \( T_\mathrm{Rice} \) required for a CFAS in a Ricean environment to match the HSP of 0.01 of a CFAS in a Rayleigh environment with \( T_\mathrm{Ray} \).}
    \label{fig:size}
\end{figure}

As in Fig. \ref{fig:alldims}, Fig. \ref{fig:size} also highlights how influential $\kappa$ is to the FAS performance in a Ricean fading environment. It is clear that as $\kappa$ increases, a significant increase in the side length of the FAS is required to maintain an HSP performance equivalent to that experienced under Rayleigh fading. As discussed above, this is to compensate for reduced channel variations. Table \ref{tab:areainc} shows the ratio between the area of a FAS in a Ricean environment ($A_\mathrm{Rice}=T_\mathrm{Rice}^2$) and a Rayleigh environment ($A_\mathrm{Ray}=T_\mathrm{Ray}^2$) required for an HSP of 0.01.
\vspace{-0.3em}
\begin{table}[ht]
\caption{Ratio between the Ricean and Rayleigh FAS areas ($A_\mathrm{Rice}/A_\mathrm{Ray}$) required for a HSP of 0.01}
\begin{tabular}{|c|c|c|c|}
\hline
$T_\mathrm{Ray}$ & $\kappa=1$ & $\kappa=4$ & $\kappa=7$ \\ \hline
0.5 & 6.25 & 41.42 & 77.44 \\ \hline
1 & 7.90 & 70.39 & 131.10 \\ \hline
1.5 & 9.65 & 100.88 & 205.83 \\ \hline
2 & 11.25 & 132.37 & 305.03\\ \hline
\end{tabular}
\label{tab:areainc}
\end{table}
\vspace{0.5em}

Figure \ref{fig:alldims} and Table \ref{tab:areainc} show that $A_\mathrm{Rice}/A_\mathrm{Ray}$ increases with $T_0$, the FAS side length in a Rayleigh environment. This occurs because the HSP performance is already high when $T_\mathrm{Ray}$ is large, requiring a faster increase in FAS area to introduce sufficient variation to achieve comparable performance.

\subsection{Comparison of the 1D EEC and LCR Methods}

While Fig. \ref{fig:alldims} shows the accuracy of the EEC method when the azimuthal AoA is $\phi=\frac{\pi}{4}$ and $\kappa=\{0.2,2\}$, we are interested in investigating the accuracy in more detail.  As discussed in Section \ref{subsec:arbitrarydims}, the LCR method is the only other analytical method for the HSP of a FAS and is asymptotically exact, but is only applicable to 1D FASs. Therefore, Fig. \ref{fig:diff} considers the difference between the EEC and LCR methods in the 1D scenario for a range of K-factors and azimuthal AoAs. As in Fig. \ref{fig:size}, $x$ is set to give an HSP of 0.01.

\begin{figure}[t]
    \centering
    \includegraphics[scale=0.53,trim={0.05cm 0.1cm 0cm 0cm},clip]{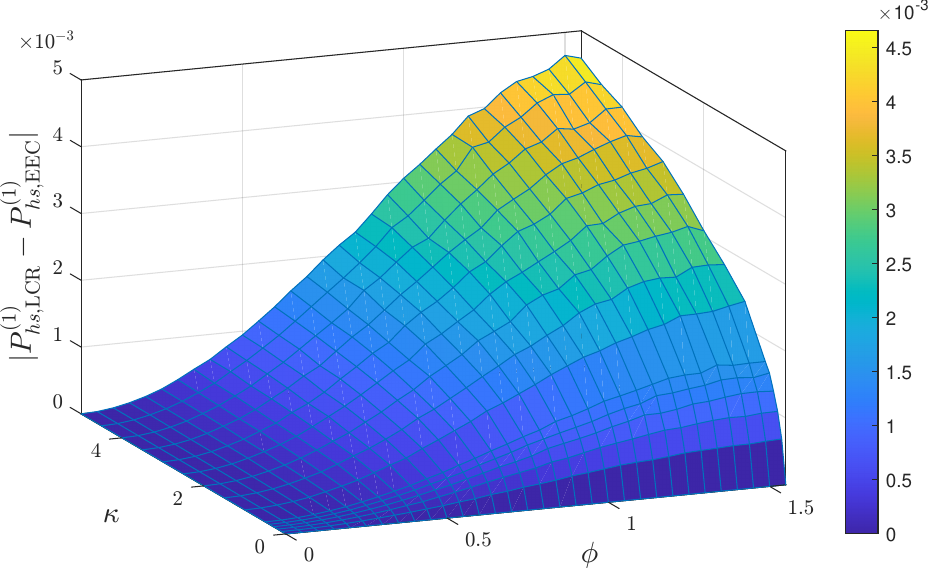}
    \caption{The difference between the 1D HSP using the LCR and EEC methods.}
    \label{fig:diff}
\end{figure}

When \( \kappa = 0 \) and \( \phi = 0 \), the two methods yield identical results, confirming the discussion in Section \ref{subsec:arbitrarydims}. This occurs because, in these scenarios, the LoS channel maintains a constant phase across the entire FAS, satisfying all assumptions necessary for the EEC method. Notably, the EEC method exhibits the highest accuracy in the most practically relevant cases. The accuracy deteriorates as endfire (\(\phi = \frac{\pi}{2}\)) is approached, but this is an undesirable system configuration. Also, accuracy decreases as $\kappa$ increases. As shown in Figs. \ref{fig:alldims} and \ref{fig:size}, the FAS becomes less effective in these conditions due to reduced channel variations, so such scenarios are unlikely candidates for FAS deployment. In the preferred operating regions (low \( \kappa \), \( \phi < \frac{\pi}{4} \)), the EEC method closely approximates the asymptotically exact LCR method.
\vspace{-0.2em}

\section{Conclusion}

In this work, we presented the first analytical results for a CFAS in a Ricean fading environment. We derived a highly accurate approximation for the HSP of a CFAS in 0, 1, 2, and 3 dimensions using the EEC. We also provided the first closed-form expression for the EC density of a non-central \(\chi_2^2\) random field. The approximate HSPs exhibit excellent agreement with simulations and closely match the asymptotically exact LCR method HSP in 1D. Finally, we analyzed the impact of the Ricean K-factor on the HSP of a CFAS, demonstrating that performance degrades rapidly as $\kappa$ increases, underscoring the effectiveness of FAS in rich scattering environments.

\vspace{-0.2em}
\section*{Appendix A \\ Proof of Lemma \ref{lem:genform}}

Using $\Gamma(1+i)=i!$ and rearranging \eqref{eq:rhoj} gives:
\begin{align}
    \!\!\rho_j(x)&\!=\!\frac{\mathrm{e}^{-\frac{\lambda+x}{2}}x^{1-j/2}(j-1)!}{(2\pi)^{j/2}}\sum\limits_{l=0}^{\lfloor\frac{j-1}{2}\rfloor}\sum\limits_{m=0}^{j-1-2l}\frac{(-1)^{j-1+m+l}}{l!\,m!\,2^l} \notag\\ &\quad\times x^{m+l}\sum\limits_{i=0}^\infty \frac{\left(\frac{\lambda x}{4}\right)^{\!i}}{(i!)^2}1_{\{2\geq j-m-2l-2i\}}\!\binom{1+2i}{j\!-\!1\!-\!m\!-\!2l}\!, \notag \\
    &\!=\!\frac{\mathrm{e}^{-\frac{\lambda+x}{2}}\!x^{1\!-\!j/2}(j\!\!-\!\!1\!)!}{(2\pi)^{j/2}}\!\!\sum\limits_{l=0}^{\lfloor\!\frac{j-1}{2}\!\rfloor}\!\sum\limits_{m=0}^{j\!-\!1\!-\!2l}\!\!\frac{(\!-1\!)^{j\!-\!1\!+m+l}x^{m+l}\!\!}{l!\,m!\,2^l}S_{jlm}(x), \label{eq:rhojT} 
\end{align} \vspace{-0.5em}
where
\begin{equation}
    S_{jlm}(x)\!=\!\sum\limits_{i=0}^\infty \frac{\left(\frac{\lambda x}{4}\right)^i}{(i!)^2}1_{\{2\geq j-m-2l-2i\}}\binom{1+2i}{j\!-\!1\!-\!m\!-\!2l}\!.
\end{equation}
Note that the outer summations in \eqref{eq:rhojT} are finite. Focusing on $S_{jlm}(x)$, the innermost infinite summation over $i$, let $z=\frac{\lambda x}{4}$ and $r=j\!-\!1\!-\!m\!-\!2l$. Thus, 
\begin{equation}\label{indfn}
    S_{jlm}(x) = \sum\limits_{i=0}^\infty \frac{z^i}{(i!)^2}1_{\{2\geq r+1-2i\}}\binom{1+2i}{r}.
\end{equation}
Formally expanding the  binomial coefficient in \eqref{indfn} gives
\begin{equation}\label{binco}
\binom{1+2i}{r} =\frac{(2i-r+2)(2i-r+3) \ldots (2i+1)}{r!}.
\end{equation}
Note that when the indicator function in \eqref{indfn} returns 1, all terms in the numerator of \eqref{binco} are positive and when the indicator function returns zero, one of the terms in the numerator is also zero. As a result, the indicator function is redundant and can be removed if the binomial coefficient is replaced by \eqref{binco}. Hence, letting $y=\sqrt{z}$,
\begin{equation}\label{Sjlm}
    S_{jlm}(x)\!=\!\frac{y^{r-1}}{r!}\sum\limits_{i=0}^\infty\frac{y^{2i-r+1}}{(i!)^2}\left((2i\!+\!1)\dots(2i\!-\!r\!+\!2)\right).
\end{equation}
As $\frac{d^r}{dy^r}y^{2i+1}\!\!=\!(2i\!+\!1)\dots(2i\!-\!r\!+\!2)y^{2i-r+1}\!$, \eqref{Sjlm} collapses to
\begin{equation}
    S_{jlm}(x)=\frac{y^{r-1}}{r!}\frac{d^r}{dy^r}\left\{y\,I_0\!\left(2y\right)\right\}. \label{eq:Tdr}
\end{equation}
Applying the Leibniz rule,
\begin{equation}
    \frac{d^r}{dy^r}\{y\,I_0\!\left(2y\right)\}\!=\! \sum\limits_{s=0}^{r}\binom{r}{s}\!\!\left(\frac{d^s}{dy^s}y\right)\!\!\left(\frac{d^{r-s}}{dy^{r-s}}I_0\!\left(2y\right)\right).
\end{equation}
The derivatives of $y$ are $0$ when $s>1$. Therefore, only the $s=\{0,1\}$-th derivatives of $y$ need to be considered. From \cite{nist_digital_2023}, the $r$-th derivative of $I_0(2y)$ is $\sum_{s=0}^r\binom{r}{s}I_{2s-r}(2y)$. Therefore,
\begin{multline}
    \frac{d^r}{dy^r}\left\{y\,I_0\!\left(2y\right)\right\}=y\,I_r(2y) + \sum\limits_{s=0}^{r-1}\bigg(r\binom{r-1}{s} \\ \times I_{2s-r+1}(2y)+y\binom{r}{s}I_{2s-r}(2y)\bigg). \label{eq:dry}
\end{multline}
Combining \eqref{eq:rhojT}, \eqref{eq:Tdr} and \eqref{eq:dry} gives \eqref{eq:rhojj}.  
\bibliographystyle{IEEEtran}
\bibliography{IEEEabrv, referencesFAS}

\begin{thebibliography}{10}
\providecommand{\url}[1]{#1}
\csname url@samestyle\endcsname
\providecommand{\newblock}{\relax}
\providecommand{\bibinfo}[2]{#2}
\providecommand{\BIBentrySTDinterwordspacing}{\spaceskip=0pt\relax}
\providecommand{\BIBentryALTinterwordstretchfactor}{4}
\providecommand{\BIBentryALTinterwordspacing}{\spaceskip=\fontdimen2\font plus
\BIBentryALTinterwordstretchfactor\fontdimen3\font minus \fontdimen4\font\relax}
\providecommand{\BIBforeignlanguage}[2]{{%
\expandafter\ifx\csname l@#1\endcsname\relax
\typeout{** WARNING: IEEEtran.bst: No hyphenation pattern has been}%
\typeout{** loaded for the language `#1'. Using the pattern for}%
\typeout{** the default language instead.}%
\else
\language=\csname l@#1\endcsname
\fi
#2}}
\providecommand{\BIBdecl}{\relax}
\BIBdecl

\bibitem{kkwong1}
K.-K. Wong, A.~Shojaeifard, K.-F. Tong, and Y.~Zhang, ``Fluid antenna systems,'' \emph{IEEE Trans. Commun.}, vol.~20, no.~3, pp. 1950--1962, Mar. 2021.

\bibitem{kkwong2}
------, ``Performance limits of fluid antenna systems,'' \emph{IEEE Commun. Lett.}, vol.~24, no.~11, pp. 2469--2472, Nov. 2020.

\bibitem{FAS_6G}
W.~K. New \emph{et~al.}, ``A tutorial on fluid antenna system for {6G} networks: Encompassing communication theory, optimization methods and hardware designs,'' \emph{IEEE Commun. Surv. Tutor.}, early access, Nov. 2024.

\bibitem{zhu_moveable_2024}
L.~Zhu, W.~Ma, and R.~Zhang, ``Movable antennas for wireless communication: {O}pportunities and challenges,'' \emph{IEEE Commun. Mag.}, vol.~62, no.~6, pp. 114--120, June 2024.

\bibitem{zhu_modeling_2024}
------, ``Modeling and performance analysis for movable antenna enabled wireless communications,'' \emph{IEEE Trans. Wirel. Commun.}, vol.~23, no.~6, pp. 6234--6250, June 2024.

\bibitem{psomas}
C.~Psomas, P.~J. Smith, H.~A. Suraweera, and I.~Krikidis, ``Continuous fluid antenna systems: Modeling and analysis,'' \emph{IEEE Commun. Lett.}, vol.~27, no.~12, pp. 3370--3374, Dec. 2023.

\bibitem{smith_dimensional_2025}
P.~J. Smith, A.~S. Inwood, M.~Matthaiou, and R.~Senanayake, ``Dimensional scaling laws for continuous fluid antenna systems,'' \emph{IEEE Wirel. Commun. Lett.}, (to appear).

\bibitem{RISFAS}
J.~Chen \emph{et~al.}, ``Low-complexity beamforming design for {RIS}-assisted fluid antenna systems,'' in \emph{Proc. IEEE Int. Conf. Commun.}, June 2024, pp. 1377--1382.

\bibitem{ULFAS}
G.~Hu \emph{et~al.}, ``Fluid antennas-enabled multiuser uplink: {A} low-complexity gradient descent for total transmit power minimization,'' \emph{IEEE Commun. Lett.}, vol.~28, no.~3, pp. 602--606, Mar. 2024.

\bibitem{alouini}
M.~Khammassi, A.~Kammoun, and M.-S. Alouini, ``A new analytical approximation of the fluid antenna system channel,'' \emph{IEEE Trans. Wirel. Commun.}, vol.~22, no.~12, pp. 8843--8858, Dec. 2023.

\bibitem{block}
P.~Ramírez-Espinosa, D.~Morales-Jimenez, and K.-K. Wong, ``A new spatial block-correlation model for fluid antenna systems,'' \emph{IEEE Trans. Wirel. Commun.}, vol.~23, no.~11, pp. 15\,829--15\,843, Nov. 2024.

\bibitem{cop1}
F.~Rostami~Ghadi, K.-K. Wong, F.~J. López-Martínez, and K.-F. Tong, ``Copula-based performance analysis for fluid antenna systems under arbitrary fading channels,'' \emph{IEEE Commun. Lett.}, vol.~27, no.~11, pp. 3068--3072, Nov. 2023.

\bibitem{cop2}
F.~Rostami~Ghadi \emph{et~al.}, ``A {G}aussian copula approach to the performance analysis of fluid antenna systems,'' \emph{IEEE Trans. Wirel. Commun.}, vol.~23, no.~11, pp. 17\,573--17\,585, Nov. 2024.

\bibitem{adler}
R.~J. Adler and J.~E. Taylor, \emph{Random {F}ields and {G}eometry}.\hskip 1em plus 0.5em minus 0.4em\relax New York, NY, USA: Springer Science \& Business Media, 2009.

\bibitem{davies}
R.~B. Davies, ``Hypothesis testing when a nuisance parameter is present only under the alternatives,'' \emph{Biometrika}, vol.~74, no.~1, pp. 33--43, Mar. 1987.

\bibitem{cheng_second_2009}
X.~Cheng, C.-X. Wang, D.~I. Laurenson, and A.~V. Vasilakos, ``Second order statistics of non-isotropic mobile-to-mobile {R}icean fading channels,'' in \emph{Proc. IEEE Int. Conf. Commun.}, Jun. 2009.

\bibitem{taylor_gaussian_2006}
J.~E. Taylor, ``A {G}aussian kinematic formula,'' \emph{The Annals of Probability}, vol.~34, no.~1, pp. 122--158, Jan. 2006.

\bibitem{adlerpdf}
R.~J. Adler, J.~E. Taylor, and K.~J. Worsley, ``Applications of {R}andom {F}ields and {G}eometry: Foundations and {C}ase {S}tudies,'' {A}ccessed: Mar. 27, 2025. [Online]. Available: https://robert.net.technion.ac.il/publications/.

\bibitem{foschini}
D.-S. Shiu, G.~Foschini, M.~Gans, and J.~Kahn, ``Fading correlation and its effect on the capacity of multielement antenna systems,'' \emph{IEEE Trans. Commun.}, vol.~48, no.~3, pp. 502--513, Mar. 2000.

\bibitem{nist_digital_2023}
\BIBentryALTinterwordspacing
{National Institute of Standards and Technology}, ``Digital {L}ibrary of {M}athematical {F}unctions.'' {A}ccessed: Mar. 18, 2025. [Online]. Available: \url{https://dlmf.nist.gov}
\BIBentrySTDinterwordspacing

\end{thebibliography}

\end{document}